\newtheorem{example}{Example}
\newtheorem{remrk}{Remark}
\newtheorem{prop}{Proposition}
\newtheorem{lemma}{Lemma}
\newtheorem{thm}{Theorem}
\DeclareMathAlphabet{\mathcal}{OMS}{cmsy}{m}{n}
\begin{document}

\title{A Rate-Distortion Exponent Approach to Multiple Decoding Attempts
for Reed-Solomon Codes}

\author{Phong S. Nguyen, Henry D. Pfister, and Krishna R. Narayanan%
\thanks{This material is based upon work supported by the National Science
Foundation under Grant No. 0802124. The work of P. Nguyen was also
supported in part by a Vietnam Education Foundation fellowship. Any
opinions, findings, conclusions, or recommendations expressed in this
material are those of the authors and do not necessarily reflect the
views of the National Science Foundation.%
}\\
{\normalsize Department of Electrical and Computer Engineering,
Texas A\&M University }\\
{\normalsize College Station, TX 77840, U.S.A. }\\
{\normalsize \{psn, hpfister, krn\}@tamu.edu }\vspace{-2mm}}
\maketitle
\begin{abstract}
Algorithms based on multiple decoding attempts of Reed-Solomon (RS)
codes have recently attracted new attention. Choosing decoding candidates
based on rate-distortion theory, as proposed previously by the authors,
currently provides the best performance-versus-complexity trade-off.
In this paper, an analysis based on the rate-distortion exponent is
used to directly minimize the exponential decay rate of the error
probability. This enables rigorous bounds on the error probability
for finite-length RS codes and leads to modest performance gains.
As a byproduct, a numerical method is derived that computes the rate-distortion
exponent for independent non-identical sources. Analytical results
are given for errors/erasures decoding.
\end{abstract}
\thispagestyle{empty}\pagestyle{empty}

\section{Introduction}

%
{}The design of a computationally efficient soft-decision decoding algorithm
for Reed-Solomon (RS) codes has been the topic of significant research
interest for the past several years. Currently, there are several
soft-decision decoding algorithms for RS codes which exhibit a wide
range of trade-offs between computational complexity and error performance. 

Among such decoding methods is a class of algorithms called multiple
errors-and-erasures decoding. The algorithms belonging to this class
first construct a set of erasure patterns based on the available soft
information and then run an errors-and-erasures decoding algorithm,
such as the Berlekamp-Massey (BM) algorithm, multiple times. Each
time one erasure pattern in the set is used for decoding. By doing
this, the algorithm outputs a list of candidate codewords and then
chooses the best codeword from the list. Several algorithms of this
type, including the popular generalized minimum distance (GMD) decoding
algorithm, are discussed in \cite{Forney-it66,Lee-globecom08,Bellorado-isit06,Nguyen-aller09}. 

In \cite{Nguyen-aller09}, the authors proposed a rate-distortion
(RD) approach for constructing the set of erasure patterns. The main
idea is to choose an appropriate distortion measure so that the decoding
is successful if and only if the distortion between the error pattern
and erasure pattern is smaller than a fixed threshold. After that,
a set of erasure patterns is generated randomly (similar to a random
codebook generation) in order to minimize the expected minimum distortion.
The approach was also extended to analyze multiple-decoding for decoding
schemes beyond conventional errors-and-erasures decoding. 

\begin{singlespace}
One of the drawbacks in the RD approach is that the mathematical framework
is only valid as the block-length goes to infinity. Therefore, we
also consider the natural extension to a rate-distortion \emph{exponent}
(RDE) approach that studies the behavior of the probability, $p_{e}$,
that the transmitted codeword is not on the list as a function of
the block-length $N$. The overall error probability can be approximated
by $p_{e}$ because the probability that the transmitted codeword
is on the list but not chosen is very small compared to $p_{e}$.
Hence, our new approach essentially focuses on investigating the exponent
at which the error probability decays as $N$ goes to infinity. 
\end{singlespace}

The proposed RDE approach can also be considered as the generalization
of the RD approach since the latter is a special case of the former
when the RDE function tends to zero. Using the RDE analysis, our proposed
approach also helps answer the following two questions: (i) What is
the maximum rate-distortion exponent achievable at or below a given
number of decoding attempts (or a given size of the set of erasure
patterns)? (ii) What is the minimum number of decoding attempts required
to achieve a rate-distortion exponent at or above a given level? 

The paper is organized as follows. In Section \ref{sec:MultipleBMA},
we review multiple errors-and-erasures decoding algorithms and highlight
the connection between multiple errors-and-erasures decoding and rate-distortion.
Then, in Section \ref{sec:EXP-approach}, we propose a RDE approach
to construct a good set of erasure patterns for a finite length codewords.
Next, we discuss how to compute the RDE function which is required
in the proposed approach. Finally, simulation results are presented
in Section \ref{sec:Simulation} and conclusion is provided in Section
\ref{sec:Conclusion}.

\section{Multiple errors-and-erasures Decoding\label{sec:MultipleBMA}}

In this section, we discuss several multiple errors-and-erasures decoding
algorithms. While each algorithm uses a different set of erasure patterns,
the common trend is that one either erases or tries several different
candidates for each symbol in the least reliable positions (LRPs).
One focuses on the LRPs because the hard-decision made at these positions
are more likely to be incorrect. 

Let $\mathbb{F}_{m}$ be the Galois field with $m$ elements denoted
as $\alpha_{1},\alpha_{2},\ldots,\alpha_{m}$. We consider an $(N,K$)
RS code of length $N$ and dimension $K$ over $\mathbb{F}_{m}$.
Assume that we send a codeword $\mathbf{c}=(c_{1},c_{2},\ldots,c_{N})$
over some channel and $\mathbf{r}=(r_{1},r_{2},\ldots,r_{N})$ is
the received vector. A well-known decoding threshold states that a
single attempt of errors-and-erasures decoding succeeds if and only
if\vspace{-2mm} \begin{equation}
2v+e<d_{min}=N-K+1\label{eq:decthreshold}\end{equation}

\vspace{-6mm}~

\hspace{-3.5mm}where $e$ is number of erased symbols and $v$ is
the number of errors in unerased positions. A multiple errors-and-erasures
decoding is considered to succeed if the decoding threshold (\ref{eq:decthreshold})
is satisfied for at least one attempt of decoding. Intuitively, the
best case is when one erases an error and the worst case is when ones
wastes an erasure on a hard-decision symbol that turns out be correct.

The first algorithm of this type is called Generalized Minimum Distance
(GMD) decoding \cite{Forney-it66} where the set of erasure patterns
is obtained by successively erasing the $0,2,4,\ldots,d_{min}-1$
LRPs (with the assumption that the minimum distance $d_{min}$ is
odd). Recent work by Lee and Kumar \cite{Lee-globecom08} proposes
a soft-information successive (multiple) error-and erasure decoding
(SED) which constructs the set of erasure patterns in a more exhaustive
way. Specifically, SED$(l,f)$ tries to erase all possible combinations
of an even number less than or equal to $f$ of positions within the
$l$ LRPs. The SED algorithm hence yields better performance but at
increased complexity. 

In an attempt to answer the question how to build a good set of erasure
patterns in terms of performance-versus-complexity, in \cite{Nguyen-aller09},
we proposed a probabilistic method based on rate-distortion theory
and random coding arguments instead of the deterministic methods which
had been used in previously proposed algorithms. Basically, after
defining $x^{N}$ and $\hat{x}^{N}$ as an error pattern and an erasure
pattern whose letters $x_{i}$'s and $\hat{x}_{i}$'s are in the alphabets
$\mathcal{X}$ and $\hat{\mathcal{X}}$ respectively, a letter-by-letter
distortion measure $\delta:\mathcal{X}\times\hat{\mathcal{X}}\to\mathbb{R}_{\geq0}$
is chosen properly so that the condition (\ref{eq:decthreshold})
can be reduced to the form\vspace{-1mm}\begin{equation}
d\left(x^{N},\hat{x}^{N}\right)<N-K+1\label{eq:distorthreshold}\end{equation}

\vspace{-6mm}~

\hspace{-3.5mm}where the distortion between an error pattern and
an erasure pattern $d\left(x^{N},\hat{x}^{N}\right)=\sum_{i=1}^{N}\delta(x,\hat{x}_{i})$
is smaller than a fixed threshold. In general, an appropriate distortion
measure $\delta(j,k)$ for every $j\in\mathcal{X}$ and $k\in\hat{\mathcal{X}}$
should be specified.
\begin{example}
Consider a specific class of multiple errors-and-erasures (Berlekamp-Massey)
top-$\ell$ decoding (mBM-$\ell$) for an positive integer $\ell$
smaller than the field size $m$ where at each codeword index, up
to the $\ell$-th most likely symbols are taken care of. In this case,
$\mathcal{X}=\hat{\mathcal{X}}=\mathbb{Z}_{l+1}$ and $x^{N}\in\mathcal{X}^{N}$
where at each index $i$, $x_{i}=0$ implies that using up to the
$\ell$-th most likely symbols as the hard-decision all gives an error,
$x_{i}=j$ implies that the $j$-th most likely symbol is correct
for $j=1,2,\ldots,\ell$; $\hat{x}^{N}\in\hat{\mathcal{X}}^{N}$ where
at each index $i$, $\hat{x}_{i}=0$ implies that an erasure is applied
and $\hat{x}_{i}=k$ implies that the $k$-th most likely symbol is
used as the hard-decision for $k=1,2,\ldots,\ell$. For example, mBM-1
is the case of multiple conventional errors-and-erasures decoding.
The letter-by-letter distortion measure for mBM-1 is chosen in the
following way\vspace{-1mm}\begin{equation}
\begin{array}{cc}
\delta(0,0)=1 & \delta(0,1)=2\\
\delta(1,0)=1 & \delta(1,1)=0.\end{array}\label{eq:dstmb1}\end{equation}
\vspace{-4mm}

It is also possible to choose appropriate distortion measures that
work for $\ell>1$ and other decoding schemes such as algebraic soft-decision
(ASD) decoding. Still, the main idea is to convert the decoding threshold
of the corresponding decoding scheme into the form of (\ref{eq:distorthreshold}).

Thus, by viewing $x^{N}$ (resp. $\hat{x}^{N}$) as a source sequence
(resp. reproduction sequence) and choosing a suitable distortion measure,
the task of designing a good set of erasure patterns turns out to
be how to best approximate the source sequence with a minimum number
of reproduction sequences. In other words, it becomes a covering problem
where one wants to cover the most-likely error patterns with the fewest
number of balls whose centers are erasure patterns. The main steps
in the RD based algorithm are given here briefly, but more detail
can be found in \cite{Nguyen-aller09}.

\emph{Step 1}: Empirically compute the reliability matrix whose entries
are $\Pr(c_{i}=\alpha_{j}|r_{i})$ for $i=1,2,\ldots N$ and $j=1,2,\ldots m$.
From this, determine probability matrix $\mathbf{P}$ where $p_{i,j}=\Pr(x_{i}=j$)
for $i=1,2,\ldots,N$ and $j\in\mathcal{X}$.

\emph{Step 2}: Compute the RD function using $\mathbf{P}$. Determine
the test-channel input-distribution matrix $\mathbf{Q}$ where $q_{i,k}=\mbox{\ensuremath{\Pr}}(\hat{x}_{i}=k)$
for $i=1,2\ldots,N$ and $k\in\mathcal{\hat{X}}$ that achieves a
point on the RD curve corresponding to a chosen rate $R$.

\emph{Step~3: }Randomly generate a set $\mathcal{B}$ of~$2^{R}$~erasure
patterns using the distribution matrix $\mathbf{Q}$ in the correct
reliability order of the codeword positions.

\emph{Step~4:} Run multiple attempts of the corresponding decoding
scheme using the set $\mathcal{B}$ to produce a list of candidate
codewords.

\emph{Step~5:} Use Maximum-Likelihood (ML) decoding to pick the best
codeword on the list.\vspace{-2mm}
\end{example}

\section{RATE-DISTORTION EXPONENT APPROACH\label{sec:EXP-approach}}

In the RD approach, the set $\mathcal{B}$ of $2^{N\bar{R}}$ (or
$2^{R}$) erasure patterns can be generated randomly so that%
\footnote{We denote the rate and distortion by $R$ and $D$, respectively,
using unnormalized quantities, i.e., $R=N\bar{R}$ and $D=N\bar{D}$.%
}\vspace{-2mm} \[
\lim_{N\rightarrow\infty}\frac{1}{N}E_{x^{N},\mathcal{B}}[\min_{\hat{x}^{N}\in\mathcal{B}}d(x^{N},\hat{x}^{N})]<\bar{D}.\]

\vspace{-4mm}~

Thus, for large enough $N$, with high probability we have $\min_{\hat{x}^{N}\in\mathcal{B}}d(x^{N},\hat{x}^{N})<N\bar{D}=D$.
Basically, \cite{Nguyen-aller09} focuses on minimizing the average
minimum distortion with little knowledge of how the tail of the distribution
behaves. In this paper, we instead focus on directly minimizing the
probability that the minimum distortion is not less than the pre-determined
threshold $D=N-K+1$ (due to the condition (\ref{eq:distorthreshold}))
with the help of an error-exponent analysis. The exact probability
of interest is $p_{e}=\Pr(x^{N}:\min_{\hat{x}^{N}\in\mathcal{B}}d(x^{N},\hat{x}^{N})>D)$
that reflects how likely the decoding threshold (\ref{eq:decthreshold})
is going to fail.

In other words, every error pattern $x^{N}$ can be covered by a sphere
centered at an erasure pattern $\hat{x}^{N}$ except for a set of
error patterns of probability $p_{e}$. The RDE analysis shows that
$p_{e}$ decays exponentially as $N\to\infty$ and the maximum exponent
attainable is the RDE function. In our context, we have a source sequence
$x^{N}$ of $N$ independent non-identical source components. We denote
the rate-distortion exponent by $F(R,D)$ using unnormalized quantities
(i.e., without dividing by $N$) and note that exponent used by other
authors in \cite{Blahut-it74,Marton-it74,Csiszar-1981} is often the
normalized version $\bar{F}(R,D)\triangleq\frac{F(R,D)}{N}$. 

The original RDE function $F(R,D)$, defined in \cite{Blahut-it74}
for a single source $x$, is given by%
\footnote{All logarithms are taken to base 2.%
}\vspace{-2mm}\[
F(R,D)=\max_{\mathbf{w}}\min_{\mathbf{\tilde{p}}\in\mathcal{P}_{R,D}}\sum_{j}\tilde{p}_{j}\log\frac{\tilde{p}_{j}}{p_{j}}\]
\vspace{-3mm}~

\hspace{-3.5mm}where $p_{j}\triangleq\Pr(x=j)$, $w_{k|j}\triangleq\Pr(\hat{x}=k|x=j)$,
and\vspace{-2mm}

\[
\mathcal{P}_{R,D}=\left\{ \mathbf{\tilde{p}}\bigg|\begin{array}{c}
\sum_{j}\sum_{k}\tilde{p}_{j}w_{k|j}\log\frac{w_{k|j}}{\sum_{j}\tilde{p}_{j}Q_{k|j}}\geq R\\
\sum_{j}\sum_{k}\tilde{p}_{j}w_{k|j}\delta_{jk}\geq D\end{array}\right\} .\]
\vspace{-2mm}%
{}

The RDE was first extensively discussed in \cite{Blahut-it74,Marton-it74}
and their results show that there exists a set $\mathcal{B}$ of roughly
$2^{N\bar{R}}$ codewords, generated randomly using the test-channel
input distribution matrix $\mathbf{Q}$, that achieves $\bar{F}(R,D)$.
This gives the upper bound that for every $\epsilon>0$, we have\vspace{-1mm}
\begin{equation}
p_{e}\leq2^{-N[\bar{F}(R,D)-\epsilon]}.\label{eq:BlahutUB}\end{equation}

\vspace{-6.5mm}~

\hspace{-3.5mm}for $N$ large enough (see \cite[p. 229]{Blahut-1987}).
An exponentially tight lower bound for $p_{e}$ can also be obtained
for $N$ large enough (see \cite[p. 236]{Blahut-1987}) and this gives\vspace{-2mm}\[
\lim_{N\to\infty}-\frac{1}{N}\log p_{e}=\bar{F}(R,D).\]
\vspace{-5mm}~

\hspace{-3.5mm}\noindent\hspace{1em}{\itshape Proposed algorithm:}
In the RDE approach proposed here, instead of computing the RD function,
we need to compute the RDE function $F(R,D)$ along with the optimal
test-channel input distribution matrix $\mathbf{Q}$ (see Section
\ref{sec:EXP-func}). This distribution serves as a replacement for
the distribution used in Step 2 of the RD based algorithm given in
the previous section. Apart from this, the other steps of that algorithm
are unchanged for the proposed RDE-based algorithm. 
\begin{remrk}
\label{rem:The-RDE-approach}The RDE approach possesses several advantages.
First, it can help one estimate the smallest number of decoding attempts
to get to a RDE of $F$ (or get to an error probability of roughly
$2^{-N\bar{F}}$) or, similarly, allow one to estimate the RDE (and
error probability) for a fixed number of decoding attempts. Second,
it provides a converse based on the sphere-packing bound lower bound
for $p_{e}$. This implies that, given an arbitrary set $\mathcal{B}$
of roughly $2^{N\bar{R}}$ erasure patterns and any $\epsilon>0$,
the probability $p_{e}$ cannot be made lower than $2^{-N[\bar{F}(R,D)+\epsilon]}$
for $N$ large enough. Thus, no matter how one chooses the set $\mathcal{B}$
of erasure patterns, the difference between the induced probability
of error and the $p_{e}$ for the RDE approach becomes negligible
for $N$ large enough.\vspace{-4.3mm}
\end{remrk}
~
\begin{remrk}
It is interesting to note that the RDE approach for ASD decoding schemes
contains the special case where the codebook has only one entry (i.e.,
ASD decoding is run one time). In this case, the distribution of the
codebook that maximizes the exponent implicitly generates the optimal
multiplicity matrix. This is similar to the line of work \cite{El-Khamy-dimacs05,Ratnakar-it05,Das-isit09}
where various researchers tried to find the multiplicity matrix that
optimizes the error-exponent obtained by either applying a Chernoff
bound \cite{El-Khamy-dimacs05,Ratnakar-it05} or using Sanov's theorem
\cite{Das-isit09}.\vspace{-2mm}
\end{remrk}

\section{COMPUTING THE RDE FUNCTION\label{sec:EXP-func}}

In this section, we first present an extension of Arimoto's numerical
method for computing the RDE function \cite{Arimoto-it76} that works
for any chosen discrete distortion measure. Next, we consider some
special case where we can give an analytical treatment of the function.\vspace{-1mm}

\subsection{Numerical computation of RDE function}

For each discrete source component $x_{i}$, given two parameters
$s\geq0$ and $t\leq0$, the Arimoto algorithm given in \cite{Arimoto-it76}
computes the RDE function numerically as follows.

$\bullet$ Step 1: Choose an arbitrary all-positive distribution vector~$\underline{q}^{(0)}=\left(q_{1}^{(0)},q_{2}^{(0)},\ldots,q_{|\hat{\mathcal{X}|}}^{(0)}\right)$.

$\bullet$ Step 2: Iterate the following steps at $\tau=0,1,\ldots$
\vspace{-4mm}

\begin{eqnarray*}
w_{k|j}^{(\tau)} & = & \frac{q_{k}^{(\tau)}2^{t\delta_{jk}}}{\sum_{k}q_{k}^{(\tau)}2^{t\delta_{jk}}}\\
q_{k}^{(\tau+1)} & = & \frac{\left\{ \sum_{j}p_{j}2^{-st\delta_{jk}}(w_{k|j}^{(\tau)})^{(1+s)}\right\} ^{\frac{1}{1+s}}}{\sum_{k}\left\{ \sum_{j}p_{j}2^{-st\delta_{jk}}(w_{k|j}^{(\tau)})^{(1+s)}\right\} ^{\frac{1}{1+s}}}.\end{eqnarray*}
\vspace{-3mm}~

\hspace{-3.5mm}for $j\in\mathcal{X}$ and $k\in\hat{\mathcal{X}}$.

It is shown by Arimoto that $w_{k|j}^{(\tau)}\rightarrow w_{k|j}^{\star}$
and $q_{k}^{(\tau)}\rightarrow q_{k}^{\star}$ as $\tau\rightarrow\infty$.
Using the resulting $w_{k|j}^{\star}$ and $q_{k}^{\star}$, we can
compute\vspace{-2mm}\begin{eqnarray}
F & = & \sum_{j}\tilde{p}_{j}^{\star}\log\frac{\tilde{p}_{j}^{\star}}{p_{j}}\vspace{-4mm}\label{eq:Ffromq}\\
R & = & \sum_{j}\sum_{k}\tilde{p}_{j}^{\star}w_{k|j}^{\star}\log\frac{w_{k|j}^{\star}}{\sum_{j}\tilde{p}_{j}^{\star}w_{k|j}^{\star}}\vspace{-2mm}\label{eq:Rfromq}\\
D & = & \sum_{j}\sum_{k}\tilde{p}_{j}^{\star}w_{k|j}^{\star}\delta_{jk}\vspace{-2mm}\label{eq:Dfromq}\end{eqnarray}
\vspace{-2mm}~

\hspace{-3.5mm}where $\tilde{p}_{j}^{\star}=\frac{p_{j}(\sum_{k}q_{k}^{\star}2^{t\delta_{jk}})^{-s}}{\sum_{j}p_{j}(\sum_{k}q_{k}^{\star}2^{t\delta_{jk}})^{-s}}$.
\vspace{1mm}

However, in the context we consider, the source (error pattern) $x^{N}$
comprises independent but not necessarily identical source components
$x_{i}$'s. The complexity is a problem if we consider a group of
source letters $(j_{1},j_{2},\ldots,j_{N})$ as a supper-source letter
$\mathcal{J}$, a group of reproduction letters $(k_{1},k_{2},\ldots,k_{N})$
as a super-reproduction letter $\mathcal{K}$ and apply the Arimoto
algorithm straightforwardly . Instead, we can avoid this computational
obstacle by choosing the initial distribution still arbitrarily but
following a factorization rule $q_{\mathcal{K}}^{(0)}=\prod_{i=1}^{N}q_{k_{i}}^{(0)}$.
Then, we can verify that this factorization rule still holds for $w_{\mathcal{K}|\mathcal{J}}^{(\tau)}$
and $q_{\mathcal{K}}^{(\tau)}$ after every step of the Arimoto algorithm.
This leads to\vspace{-1mm} \[
\begin{array}{ccc}
w_{\mathcal{K}|\mathcal{J}}^{\star}=\prod_{i=1}^{N}w_{k_{i}|j_{i}}^{\star} & \mbox{and} & q_{\mathcal{J}}^{\star}=\prod_{i=1}^{N}q_{k_{i}}^{\star}.\end{array}\]

Combining with $\delta_{\mathcal{JK}}=\sum_{i=1}^{N}\delta_{j_{i}k_{i}}$
and $p_{\mathcal{J}}=\prod_{i=1}^{N}p_{j_{i}},$ we have\vspace{-2mm}
\[
\tilde{p}_{\mathcal{J}}^{\star}=\prod_{i=1}^{N}\tilde{p}_{j_{i}}^{\star}.\]
\vspace{-3mm}

This gives the following proposition. 
\begin{prop}
(Factored Arimoto algorithm for RDE function) Consider a discrete
source $x^{N}$ of independent but non-identical source components
$x_{i}$'s. Given parameters $s\geq0$ and $t\leq0$, the exponent,
rate and distortion are given by\vspace{-2mm}\[ \left.F\right|_{s,t}=\sum_{i=1}^{N}\left.F_i\right|_{s,t},\,\,\, \left.R\right|_{s,t}=\sum_{i=1}^{N}\left.R_i\right|_{s,t},\,\,\, \left.D\right|_{s,t}=\sum_{i=1}^{N}\left.D_i\right|_{s,t}\]where
the components $\left.F_i\right|_{s,t},\left.R_i\right|_{s,t},\left.D_i\right|_{s,t}$
are computed parametrically by the Arimoto algorithm.\vspace{-1mm}
\end{prop}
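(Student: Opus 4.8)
\noindent\emph{Proof idea.} The plan is to prove the claim by induction on the iteration index $\tau$, showing that the factored (product) form assumed for the initial distribution is preserved by every Arimoto update, and then to pass to the limit $\tau\to\infty$ and substitute the resulting product forms into the closed-form expressions \eqref{eq:Ffromq}--\eqref{eq:Dfromq}.

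First I would fix the super-letter notation $\mathcal{J}=(j_1,\ldots,j_N)$, $\mathcal{K}=(k_1,\ldots,k_N)$ and record the three structural facts that drive everything: $p_{\mathcal{J}}=\prod_i p_{j_i}$ (independence of the $x_i$), $\delta_{\mathcal{JK}}=\sum_i\delta_{j_ik_i}$ (additivity of the distortion), and consequently $2^{t\delta_{\mathcal{JK}}}=\prod_i 2^{t\delta_{j_ik_i}}$ and $2^{-st\delta_{\mathcal{JK}}}=\prod_i 2^{-st\delta_{j_ik_i}}$. The base case is the assumed choice $q^{(0)}_{\mathcal{K}}=\prod_i q^{(0)}_{k_i}$. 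For the inductive step, assume $q^{(\tau)}_{\mathcal{K}}=\prod_i q^{(\tau)}_{k_i}$. In the first update the numerator $q^{(\tau)}_{\mathcal{K}}2^{t\delta_{\mathcal{JK}}}=\prod_i q^{(\tau)}_{k_i}2^{t\delta_{j_ik_i}}$ is a product over $i$, and its normalizing denominator $\sum_{\mathcal{K}}q^{(\tau)}_{\mathcal{K}}2^{t\delta_{\mathcal{JK}}}=\prod_i\bigl(\sum_{k_i}q^{(\tau)}_{k_i}2^{t\delta_{j_ik_i}}\bigr)$ factors because a sum of products over a product index set is the product of the sums; hence $w^{(\tau)}_{\mathcal{K}|\mathcal{J}}=\prod_i w^{(\tau)}_{k_i|j_i}$. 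Feeding this into the second update, the inner bracket $\sum_{\mathcal{J}}p_{\mathcal{J}}2^{-st\delta_{\mathcal{JK}}}(w^{(\tau)}_{\mathcal{K}|\mathcal{J}})^{1+s}=\prod_i\bigl(\sum_{j_i}p_{j_i}2^{-st\delta_{j_ik_i}}(w^{(\tau)}_{k_i|j_i})^{1+s}\bigr)$ again factors over $i$; raising to the power $1/(1+s)$ preserves this, and the outer normalization over $\mathcal{K}$ factors by the same sum-of-products argument, giving $q^{(\tau+1)}_{\mathcal{K}}=\prod_i q^{(\tau+1)}_{k_i}$ and closing the induction.

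Next I would invoke Arimoto's convergence result componentwise, $w^{(\tau)}_{k_i|j_i}\to w^\star_{k_i|j_i}$ and $q^{(\tau)}_{k_i}\to q^\star_{k_i}$ for each $i$; since only $N$ (finitely many) factors are involved, the product of convergent sequences converges to the product of the limits, so $w^\star_{\mathcal{K}|\mathcal{J}}=\prod_i w^\star_{k_i|j_i}$ and $q^\star_{\mathcal{K}}=\prod_i q^\star_{k_i}$, whence $\tilde p^\star_{\mathcal{J}}=\prod_i\tilde p^\star_{j_i}$ exactly as stated in the excerpt. Substituting these product forms into \eqref{eq:Ffromq}--\eqref{eq:Dfromq} finishes the argument: in $F|_{s,t}=\sum_{\mathcal{J}}\tilde p^\star_{\mathcal{J}}\log(\tilde p^\star_{\mathcal{J}}/p_{\mathcal{J}})$ the logarithm of a product becomes a sum of logarithms and the remaining marginal sums collapse, yielding $\sum_i\sum_{j_i}\tilde p^\star_{j_i}\log(\tilde p^\star_{j_i}/p_{j_i})=\sum_i F_i|_{s,t}$; the same ``mutual information is additive for product distributions'' computation — using in addition that the reproduction marginal $\sum_{\mathcal{J}}\tilde p^\star_{\mathcal{J}}w^\star_{\mathcal{K}|\mathcal{J}}$ factors over $i$ — gives $R|_{s,t}=\sum_i R_i|_{s,t}$, and $D|_{s,t}=\sum_i D_i|_{s,t}$ is immediate from $\delta_{\mathcal{JK}}=\sum_i\delta_{j_ik_i}$.

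The routine but slightly delicate part is the bookkeeping of normalization constants: at each update one must verify that every denominator is itself the product of the per-component denominators, so that the \emph{normalized} quantities, not merely the unnormalized numerators, factor. The only other point needing a word is the interchange of limit and product, which is harmless here because $N$ is finite and each coordinate iteration converges; no uniformity in $N$ is claimed or required.
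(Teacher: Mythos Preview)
Your proposal is correct and follows essentially the same approach as the paper: start from a factored initial distribution, verify by induction on $\tau$ that both Arimoto updates preserve the product form of $w^{(\tau)}_{\mathcal{K}|\mathcal{J}}$ and $q^{(\tau)}_{\mathcal{K}}$, pass to the limit, deduce that $\tilde p^\star_{\mathcal{J}}$ factors, and then read off the additive decompositions of $F$, $R$, $D$ from \eqref{eq:Ffromq}--\eqref{eq:Dfromq}. The paper merely asserts that ``this factorization rule still holds \ldots\ after every step of the Arimoto algorithm'' and states the resulting product forms; your write-up supplies exactly the bookkeeping (in particular the normalization-denominator factorizations) that the paper leaves to the reader.
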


\subsection{Analytical computation of RDE function}

In this subsection, we consider the case m-BM1 whose distortion measure
is given in (\ref{eq:dstmb1}). We study the setup that RS codewords
defined over Galois field $\mathbb{F}_{m}$ are transmitted over the
$m$-ary symmetric channel ($m$-SC) which for each parameter $p$
can be modeled as\[
\Pr(r|c)=\begin{cases}
p & \mbox{if}\,\, r=c\\
(1-p)/(m-1) & \mbox{if}\,\, r\neq c.\end{cases}\]
Here, $c$ (resp. $r$) is the transmitted (resp. received) symbol
and $r,c\in\mathbb{F}_{m}$. With this channel model, we consider
$p$ not too small so that $p>(1-p)/(m-1)$. Therefore, at each index
$i$ of the codeword, the hard-decision is also the received symbol
and then it is correct with probability $p$. Thus, we have $p_{i,1}\triangleq\Pr(x_{i}=1)=p$
for every index $i$ of the error pattern $x^{N}$. That means, in
this context we have a source $x^{N}$ with i.i.d. binary components
$x_{i}$. Since the components $x_{i}$ are i.i.d we can treat each
$x_{i}$ as a binary source $X$ with $\Pr(X=1)\triangleq p$ and
$\Pr(X=0)=1-p\triangleq\bar{p}$ and first compute the RDE function
for this source $X$.

According to \cite{Blahut-it74}, for any admissible $(R,D)$ pair
we can find two parameters $s\geq0$ and $t\leq0$ so that $F(R,D)$
can be parametrically evaluated as\begin{eqnarray*}
F(R,D) & = & sR-stD+\max_{q_{1}}\left(-\log f(q_{1})\right)\\
 & = & sR-stD-\log\min_{q_{1}}f(q_{1})\end{eqnarray*}
\vspace{-4mm}where\[
f(q_{1})=\bar{p}\left(\sum_{k}q_{k}2^{t\delta_{0k}}\right)^{-s}+p\left(\sum_{k}q_{k}2^{t\delta_{1k}}\right)^{-s}\]
and $R,D$ are given in terms of optimizing $q^{\star}$ which we
will discuss later.

For the distortion measure in (\ref{eq:dstmb1}) and note that $q_{0}=1-q_{1}$,
we have\vspace{-1mm} \[
f(q_{1})=\bar{p}\left((1-q_{1})2^{t}+q_{1}2^{2t}\right)^{-s}+p\left((1-q_{1})2^{t}+q_{1}\right)^{-s}\]
which is a convex function in $q_{1}$. We then see that\vspace{-2mm}\[
\frac{\partial f}{\partial q_{1}}=0\Leftrightarrow q_{1}^{\star}=\frac{1+2^{t}}{1-2^{t}}\left(\frac{1}{1+2^{t}}-\frac{\bar{p}^{\frac{1}{s+1}}}{2^{\frac{st}{s+1}}p^{\frac{1}{s+1}}+\bar{p}^{\frac{1}{s+1}}}\right)\triangleq\beta.\]

In order to minimize $f(q_{1})$ over $q_{1}\in[0,1]$, we consider
three following cases where the optimal $q_{1}^{\star}$ is either
on the boundary or at a point with zero gradient.

$\bullet$ Case 1: $0\leq p\leq\frac{2^{t}}{1+2^{t}}$ then $\beta\leq0$.
Since $f$ convex, it is non-decreasing in the interval $[\beta,\infty)$
and therefore in the interval $[0,1]$. Thus, the optimal $q_{1}^{\star}=0$
and we can also compute from (\ref{eq:Ffromq}), (\ref{eq:Rfromq}),
(\ref{eq:Dfromq}) that\vspace{-2mm}\[
\begin{array}{ccc}
D=1; & R=0; & F=0=D_{KL}(u||p)\end{array}\]
\vspace{-7mm}~

\hspace{-3.5mm}where in this case $u=p$.

$\bullet$ Case 2: $1\geq p\geq\frac{1}{1+2^{t(2s+1)}}$ then $\beta\geq1$.
Since $f$ convex, it is non-increasing in the interval $(-\infty,\beta]$
and therefore in the interval $[0,1]$. Thus, the optimal $q_{1}^{\star}=1$
and similarly we get\vspace{-2mm}\[
\begin{array}{ccc}
D=\frac{2\bar{p}}{p2^{2ts}+\bar{p}}; & R=0; & F=D_{KL}(u||p)\end{array}\]
%
{}where in this case $u=1-\frac{D}{2}$. We can further see that $D\in[2(1-p),1]$
and $u\in[1-D,p]$. 

$\bullet$ Case 3: $\frac{2^{t}}{1+2^{t}}<p<\frac{1}{1+2^{t(2s+1)}}$
then $\beta\in(0,1)$. In this case, the optimal $q_{1}^{\star}=\beta$.
We then can find $w_{k|j}^{\star}=\frac{q_{k}^{\star}2^{t\delta_{jk}}}{\sum_{k}q_{k}^{\star}2^{t\delta_{jk}}}$
according to \cite{Blahut-it74} and plug in (\ref{eq:Ffromq}), (\ref{eq:Rfromq}),
(\ref{eq:Dfromq}) to get%
\footnote{The binary entropy function is $H(u)\triangleq-u\log u-(1-u)\log(1-u)$
and the Kullback-Leibler divergence is $D_{KL}(u||p)\triangleq u\log\frac{u}{p}+(1-u)\log\frac{1-u}{1-p}$.%
}\vspace{-2mm}\begin{eqnarray*}
D & = & \frac{2^{t}}{1+2^{t}}+1-u\\
R & = & H(u)-H(u+D-1)\\
F & = & D_{KL}(u||p)\end{eqnarray*}
where $u=\frac{2^{\frac{st}{s+1}}p^{\frac{1}{s+1}}}{2^{\frac{st}{s+1}}p^{\frac{1}{s+1}}+\bar{p}^{\frac{1}{s+1}}}$
. With this notation of $u$, we can express $q_{1}^{\star}=\frac{1-D}{3-2(u+D)}$
and $q_{0}^{\star}=\frac{2(1-u)-D}{3-2(u+D)}.$ We can see that $D\in(1-p,1)$.
It can also be verified that, in this case, by varying $s$ and $t,$
$u$ spans $(1-D,1-D/2)$ and $R$ spans $(0,H(1-D))$.

Based on the above analysis, we obtain the following lemmas and theorems.
\begin{lemma}
\label{lem:Lemmah}Let $h(u)=H(u)-H(u+D-1)$ map $u\in\left[1-D,1-D/2\right)$
to $R$. Then, the inverse mapping of $h$,\vspace{-1mm} \[
h^{-1}:(0,H(1-D)]\to\left[1-D,1-D/2\right),\]
is well-defined and maps $R$ to $u$.\end{lemma}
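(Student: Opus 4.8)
The plan is to prove that $h$ is a continuous, strictly decreasing bijection from $[1-D,1-D/2)$ onto $(0,H(1-D)]$; once that is established the inverse map on $(0,H(1-D)]$ exists, is single-valued, and sends each $R$ to the unique $u$ with $h(u)=R$, which is exactly the assertion. Throughout I would use the Case~3 range $D\in(0,1)$, so that $1-D\in(0,1)$, and for $u\in[1-D,1-D/2)$ the shifted argument $u+D-1$ lies in $[0,D/2)\subset[0,1/2)$; hence every binary-entropy term in $h(u)=H(u)-H(u+D-1)$ is evaluated inside $[0,1]$, and $h$ is well-defined and continuous on the whole interval (with the convention $H(0)=0$).

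First I would pin down the endpoint values. At $u=1-D$ we get $h(1-D)=H(1-D)-H(0)=H(1-D)$, the claimed right end of the range. As $u\uparrow 1-D/2$, the argument $u+D-1\uparrow D/2$, and by the symmetry $H(1-D/2)=H(D/2)$ of the binary entropy function we obtain $h(u)\to H(1-D/2)-H(D/2)=0$; this value is not attained, matching the half-open interval $(0,H(1-D)]$.

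Next I would establish strict monotonicity by differentiating. Using $H'(x)=\log\frac{1-x}{x}$, one has $h'(u)=\log\frac{1-u}{u}-\log\frac{1-(u+D-1)}{u+D-1}$. Writing $v=u+D-1$, the inequality $D<1$ gives $0\le v<u<1$, and since $x\mapsto\log\frac{1-x}{x}$ is strictly decreasing on $(0,1)$, we get $h'(u)=\log\frac{1-u}{u}-\log\frac{1-v}{v}<0$ for $u\in(1-D,1-D/2)$. Hence $h$ is strictly decreasing and continuous on $[1-D,1-D/2)$; combining this with the endpoint computation and the intermediate value theorem shows that $h$ is a bijection onto $(0,H(1-D)]$, so $h^{-1}:(0,H(1-D)]\to[1-D,1-D/2)$ is well-defined (and itself continuous and strictly decreasing), which is the claim.

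The only point needing a little care -- not really an obstacle -- is the left endpoint $u=1-D$, where $v=0$ and $H'$ has a logarithmic singularity. The singularity is in the derivative only, so $h$ itself is still continuous at $1-D$, and the sign of $h'$ just to the right of $1-D$ is unambiguously negative (indeed $h'\to-\infty$ there), so strict monotonicity on $[1-D,1-D/2)$ is unaffected.
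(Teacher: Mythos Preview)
Your proof is correct and follows essentially the same route as the paper: the paper's proof simply asserts that $h'(u)<0$ on the interval (so $h$ is one-to-one) and that surjectivity follows from the preceding Case~3 analysis, while you spell out both the derivative computation $h'(u)=\log\frac{1-u}{u}-\log\frac{1-v}{v}<0$ and the endpoint evaluation $h(1-D)=H(1-D)$, $h(u)\to 0$ explicitly. The extra care you take at the $u=1-D$ endpoint is a nice touch the paper omits.
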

\begin{proof}
We first notice that $h(u)$ is strictly decreasing since the derivative
is negative over $\left[1-D,1-D/2\right)$, hence the mapping $h:\left[1-D,1-D/2\right)\rightarrow(0,H(1-D)]$
is one-to-one. From the analysis above, one can also see that $h$
is onto.\end{proof}
\begin{thm}
Using mBM-1 with $2^{R}$ decoding attempts where $R\in(0,NH(1-\frac{D}{N})]$,
the maximum rate-distortion exponent that can be achieved is\vspace{-1mm}\begin{equation}
F=N\, D_{KL}\left(h^{-1}\left(R/N\right)||\, p\right).\label{eq:ComputeF}\end{equation}
\vspace{-6mm}\end{thm}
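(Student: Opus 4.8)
The plan is to combine three ingredients that the paper has already assembled: the identification of the optimal decay rate of $p_e$ with the (unnormalized) rate--distortion exponent $F(R,D)$, the Factored Arimoto Proposition, and the explicit Case-3 formulas together with Lemma \ref{lem:Lemmah}. First I would recall the setup of Section \ref{sec:EXP-func}: under mBM-1 over the $m$-SC with $p>(1-p)/(m-1)$, the error pattern $x^N$ has i.i.d.\ binary letters with $\Pr(x_i=1)=p$, and a decoding attempt succeeds iff $d(x^N,\hat x^N)<D$ with $D=N-K+1$, by (\ref{eq:dstmb1}) and (\ref{eq:distorthreshold}). By the achievability bound (\ref{eq:BlahutUB}) together with the matching sphere-packing converse recalled in Section \ref{sec:EXP-approach}, the largest exponent at which $p_e=\Pr(\min_{\hat x^N\in\mathcal B}d(x^N,\hat x^N)>D)$ can decay, over all size-$2^R$ sets $\mathcal B$ of erasure patterns, is the unnormalized RDE function $F(R,D)$ of this source. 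So it suffices to evaluate $F(R,D)$.

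Next I would invoke the Factored Arimoto Proposition. Since the $N$ components are not just independent but identically distributed, the parametric triple factors as $F|_{s,t}=N\,F_1|_{s,t}$, $R|_{s,t}=N\,R_1|_{s,t}$, $D|_{s,t}=N\,D_1|_{s,t}$, where $(F_1,R_1,D_1)$ is the single-letter triple for the binary source $X$ with $\Pr(X=1)=p$. Eliminating $(s,t)$ gives $F(R,D)=N\,F_1(R/N,\,D/N)$, so, writing $D_1=D/N$ and $R_1=R/N$, it remains only to evaluate the single-letter exponent $F_1(R_1,D_1)$ for the distortion measure (\ref{eq:dstmb1}).

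For this I would use the three-case parametric analysis already carried out in Section \ref{sec:EXP-func}. The hypothesis $R>0$ forces $R_1>0$, which rules out Cases 1 and 2, both of which give $R=0$; hence every component sits in Case 3. There $D_1=\frac{2^t}{1+2^t}+1-u$, $R_1=H(u)-H(u+D_1-1)=h(u)$, and $F_1=D_{KL}(u||p)$, with $u$ ranging over $[1-D_1,1-D_1/2)$ and $R_1$ over $(0,H(1-D_1)]$ as $(s,t)$ vary. By Lemma \ref{lem:Lemmah}, $h$ is a bijection of $[1-D_1,1-D_1/2)$ onto $(0,H(1-D_1)]$, so the feasible $u$ for the target $(R_1,D_1)$ is the unique value $u=h^{-1}(R_1)=h^{-1}(R/N)$, and hence $F_1=D_{KL}(h^{-1}(R/N)||p)$. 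Multiplying by $N$ gives (\ref{eq:ComputeF}); the stated domain $R\in(0,NH(1-D/N)]$ is exactly $N$ times the domain $(0,H(1-D_1)]$ of $h^{-1}$, and it tacitly places us in the Case-3 distortion regime $D/N\in(1-p,1)$.

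The step I expect to need the most care is the bookkeeping around normalization and case selection in the reduction from $F(R,D)$ to $F_1(R/N,D/N)$: keeping straight that $h$ and its domain in Lemma \ref{lem:Lemmah} are indexed by the \emph{single-letter} distortion $D/N$ (not $D$), and verifying that no component can drift into Case 1 or 2 once $R>0$ --- which is immediate because the factored rate is a sum of nonnegative per-letter rates, each vanishing outside Case 3. The identification of the optimal exponent with $F(R,D)$ in the first step rests on the random-coding achievability and sphere-packing converse already cited, so the remaining work is essentially assembling the pieces.
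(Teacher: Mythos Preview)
Your proposal is correct and follows essentially the same route as the paper: reduce from the $N$-letter exponent to the per-letter exponent via the i.i.d.\ structure, then read off $F/N=D_{KL}(u\|p)$ with $u=h^{-1}(R/N)$ from the Case-3 formulas and Lemma~\ref{lem:Lemmah}. Your write-up is in fact more thorough than the paper's own proof, which simply asserts that the per-component rate and exponent are $R/N$ and $F/N$ and then invokes Case~3 and Lemma~\ref{lem:Lemmah} directly; your explicit exclusion of Cases~1 and~2 via $R>0$ and your normalization bookkeeping fill in steps the paper leaves implicit.
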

\begin{proof}
First, note that in our context where we have a source sequence $x^{N}$
of $N$ i.i.d. source components, the rate and exponent for each source
component is now $\frac{R}{N}$ and $\frac{F}{N}$. From Case 3 in
the analysis above and from Lemma \ref{lem:Lemmah}, we have\vspace{-1mm}
\[
F/N=D_{KL}(u||p)=D_{KL}\left(h^{-1}\left(R/N\right)||\, p\right)\]
and the theorem follows.\end{proof}
\begin{lemma}
Let $g(u)=D_{KL}(u||p)$ map $u\in[1-D,p]$ to $F$. Then, the inverse
mapping of $g$,\vspace{-1mm} \[
g^{-1}:[0,D_{KL}(1-D\,||\, p)]\rightarrow[1-D,p]\]
is well-defined and maps $F$ to $u$.\end{lemma}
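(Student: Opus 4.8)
The plan is to follow the same template as the proof of Lemma~\ref{lem:Lemmah}: establish that $g(u)=D_{KL}(u||p)$ is continuous and strictly monotone on $[1-D,p]$, evaluate it at the two endpoints, and then invoke continuity (the intermediate value theorem) to conclude that $g$ is a bijection onto $[0,D_{KL}(1-D||p)]$, so that $g^{-1}$ is well-defined.

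First I would differentiate $g$. Recalling that all logarithms are base $2$, a short calculation gives
\[
g'(u)=\frac{d}{du}D_{KL}(u||p)=\log\frac{u(1-p)}{p(1-u)}.
\]
For $u\in[1-D,p)$ we have $u<p$, so $u/p<1$ while $(1-u)/(1-p)>1$; hence the argument of the logarithm is strictly below $1$ and $g'(u)<0$. Thus $g$ is strictly decreasing on $[1-D,p]$ — the fact that $g'$ vanishes only at the single endpoint $u=p$ does not disturb strict monotonicity — and in particular $g$ is one-to-one. (Alternatively, one may invoke the strict convexity of $D_{KL}(\cdot||p)$ together with its unique minimizer $u=p$ to reach the same conclusion without differentiating.) Note also that the preceding case analysis yields $1-D<p$ (in Case~3, $D\in(1-p,1)$ gives $1-D\in(0,p)$, and likewise in Case~2), so the domain $[1-D,p]$ is a nondegenerate interval.

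Next I would evaluate the endpoints: $g(p)=D_{KL}(p||p)=0$, and $g(1-D)=D_{KL}(1-D||p)$, which is strictly positive since $1-D\neq p$. Because $g$ is continuous and strictly decreasing on $[1-D,p]$, its image is exactly the closed interval $[g(p),g(1-D)]=[0,D_{KL}(1-D||p)]$, and each value in this interval is attained at a unique $u$. Consequently the inverse map $g^{-1}:[0,D_{KL}(1-D||p)]\to[1-D,p]$ is well-defined and sends each exponent $F$ to the unique $u\in[1-D,p]$ with $D_{KL}(u||p)=F$, which is the claim.

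No step here is genuinely difficult; the only points that need a little care are checking that $g'$ keeps a constant sign on the half-open interval so that strict monotonicity survives at the boundary point $u=p$ where $g'=0$, and recording that $1-D<p$ so that both the domain of $g$ and its range $[0,D_{KL}(1-D||p)]$ are nondegenerate. These are precisely the analogues of the two observations used in the proof of Lemma~\ref{lem:Lemmah}.
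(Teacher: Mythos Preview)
Your proof is correct and follows essentially the same approach as the paper: the paper invokes strict convexity of $D_{KL}(\cdot||p)$ with its unique minimizer at $u=p$ to conclude that $g$ is strictly decreasing on $[1-D,p]$ (which you mention as your alternative), and then simply states that the analysis above shows $g$ is onto. Your version is somewhat more explicit in computing $g'$ and in spelling out the endpoint evaluation and continuity argument for surjectivity, but the underlying logic is identical.
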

\begin{proof}
We first see that $g(u)$ is a strictly convex function and achieved
minimum value at $u=p$ and therefore $g(u)$ is strictly decreasing
over $[1-D,p]$. Thus, the mapping $g:[1-D,p]\to[0,D_{KL}(1-D\,||\, p)]$
is one-to-one. From the analysis above, one can also see that $g$
is onto.\end{proof}
\begin{thm}
In order to achieve a rate-distortion exponent of $F\in\left[0,N\, D_{KL}\left(1-D\,||\, p\right)\right]$,
the minimum number of decoding attempts required for mBM-1 is $2^{R}$
where\vspace{-1mm}\[
R=N\left[H\left(g^{-1}\left(F/N\right)\right)-H\left(g^{-1}\left(F/N\right)+D/N-1\right)\right]^{+}\]
\end{thm}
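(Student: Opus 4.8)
The plan is to invert the Case~3 parametrization in the opposite direction from the previous theorem: there $u$ was eliminated in favour of $R$, whereas here it should be eliminated in favour of the prescribed exponent $F$, after which the monotone correspondence of Lemma~\ref{lem:Lemmah} returns the required $R$. Write $\bar D = D/N$ for the per-symbol distortion; since the $N$ components of $x^{N}$ are i.i.d.\ it suffices to work with the single binary source $X$ and rescale by $N$ at the end. Fix $F$ in the stated range and set $u^{\star} = g^{-1}(F/N)$, which is legitimate precisely because the hypothesis $F\in[0,N\,D_{KL}(1-\bar D\,||\,p)]$ places $F/N$ in the domain $[0,D_{KL}(1-\bar D\,||\,p)]$ of $g^{-1}$ (the lemma just above), so $u^{\star}\in[1-\bar D,p]$ and $D_{KL}(u^{\star}||p)=F/N$ by construction. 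When $u^{\star}$ lies in the Case~3 interval $[1-\bar D,1-\bar D/2)$, the same $u^{\star}$ that realizes the per-symbol exponent $D_{KL}(u^{\star}||p)$ also realizes the per-symbol rate $R/N = H(u^{\star}) - H(u^{\star}+\bar D-1) = h(u^{\star})$, which is strictly positive there because $h$ is strictly decreasing on $[1-\bar D,1-\bar D/2)$ with $h(1-\bar D/2) = H(1-\bar D/2) - H(\bar D/2) = 0$. Multiplying by $N$ and substituting $\bar D = D/N$ yields the asserted expression, apart from the $[\,\cdot\,]^{+}$.

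The $[\,\cdot\,]^{+}$ records the low-exponent regime, and next I would pin it down together with minimality. If $u^{\star}\in(1-\bar D/2,p]$, equivalently $F/N\le D_{KL}(1-\bar D/2\,||\,p)$, then the single all-hard-decision codeword (Case~2, $q_{1}^{\star}=1$, $R=0$) already delivers the per-symbol exponent $D_{KL}(1-\bar D/2\,||\,p)\ge F/N$, so no extra attempts are needed; and since $h$ continues to be strictly decreasing past $1-\bar D/2$ we get $h(u^{\star})\le 0$, so $[h(u^{\star})]^{+}=0$ agrees. For minimality, sufficiency of $2^{R}$ attempts is the achievability half of the RDE theorem recalled in Section~\ref{sec:EXP-approach}: a random codebook of about $2^{R}$ erasure patterns drawn from the optimal $\mathbf{Q}$ attains exponent $F(R,D)=F$. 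Necessity comes from the sphere-packing converse quoted in Remark~\ref{rem:The-RDE-approach}: any set $\mathcal{B}$ of about $2^{R'}$ erasure patterns forces $p_{e}\ge 2^{-N[\bar F(R',D)+\epsilon]}$, so its attained exponent is at most $F(R',D)$; since $F(\cdot,D)=N\,D_{KL}(h^{-1}(\cdot)||p)$ is strictly increasing on the Case~3 branch (composition of the strictly decreasing $h^{-1}$ with $D_{KL}(\cdot||p)$, strictly decreasing on $[1-\bar D,p]$), any $R'<R$ gives $F(R',D)<F$ and hence misses the target exponent.

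The step I expect to be the real obstacle is not the algebra but making the fixed-$\bar D$ bookkeeping airtight. The Case~3 identities $D = \tfrac{2^{t}}{1+2^{t}}+1-u$, $R = H(u)-H(u+D-1)$, $F = D_{KL}(u||p)$ are written parametrically in $(s,t)$, and the argument above uses tacitly that, once $D$ is pinned to our fixed value, the parameter $u$ still sweeps exactly the interval needed for $h^{-1}$ and its composition with $g^{-1}$ to be defined on the domains claimed by the two lemmas; this is precisely the ``it can be verified'' remark at the end of the Case~3 discussion, and I would pin it down by solving, for each prescribed $u$, the pair $D = 2^{t}/(1+2^{t})+1-u$ and $u = 2^{\frac{st}{s+1}}p^{\frac{1}{s+1}}/\bigl(2^{\frac{st}{s+1}}p^{\frac{1}{s+1}}+\bar{p}^{\frac{1}{s+1}}\bigr)$ for an admissible $(s\ge 0,\,t\le 0)$. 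One also has to record the natural operating regime in which the statement lives, namely $1-\bar D/2\le p$ (bounded-distance hard-decision decoding already has vanishing error probability); outside it the $R=0$ exponent degenerates to zero and the $[\,\cdot\,]^{+}$ clause would need re-examination. Once these points are settled, the two lemmas together with the monotonicity and converse arguments close the proof.
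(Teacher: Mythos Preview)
Your proposal is correct and follows essentially the same route as the paper: reduce to per-symbol quantities via the i.i.d.\ structure, then invert the Case~2/Case~3 parametrization through the lemma on $g^{-1}$ to recover $R$ from $F$. In fact you go well beyond the paper's two-line proof (which merely says ``combining all the cases in the above analysis''), by explicitly justifying the $[\,\cdot\,]^{+}$ via the Case~2 boundary and by arguing minimality through the achievability/sphere-packing pair---details the paper leaves entirely implicit.
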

\begin{proof}
We also note that the rate, distortion and exponent for each source
component is $\frac{R}{N},\frac{D}{N}$ and $\frac{F}{N}$ respectively.
Combining all the cases in the above analysis, we have \[
R/N=\left[H\left(g^{-1}\left(F/N\right)\right)-H\left(g^{-1}\left(F/N\right)+D/N-1\right)\right]^{+}\]
and the theorem follows. 
\end{proof}

\section{SIMULATION\label{sec:Simulation}}

\begin{figure}[!t]
\begin{minipage}[c][1\totalheight][t]{0.485\textwidth}%
\includegraphics[width=0.4\paperwidth]{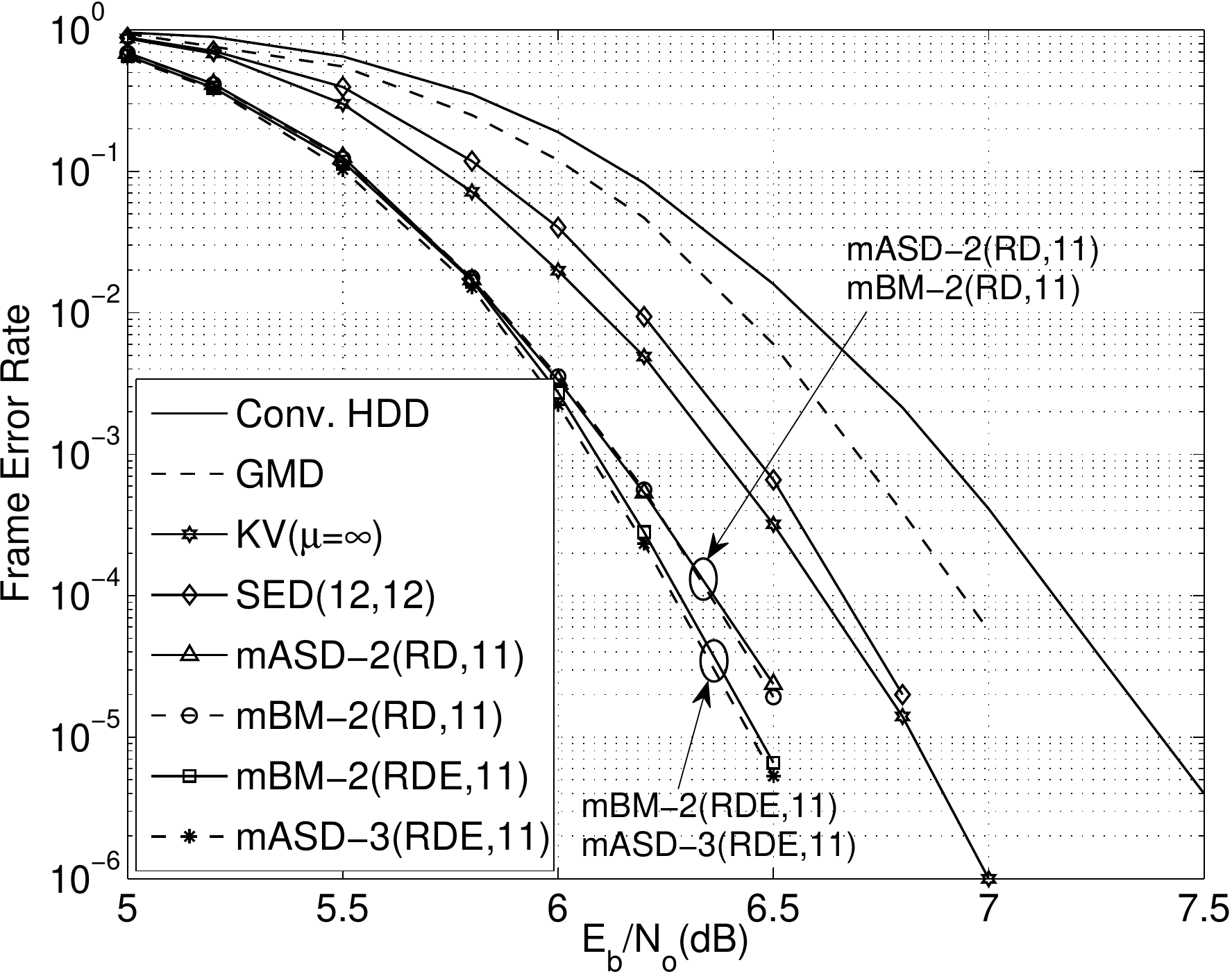}\vspace{-3mm}\caption{\label{fig:simulation} Performance of various decoding algorithms
for the (255,239) RS code over an AWGN channel.}
\end{minipage}
\end{figure}

Simulations of the proposed algorithm were conducted for the (255,239)
RS code over an AWGN channel with BPSK as the modulation format. In
Fig. \ref{fig:simulation}, the mBM-2(RD,11) curve belongs to the
algorithm mBM-2 using RD approach proposed in \cite{Nguyen-aller09}
while the mBM-2(RDE,11) one corresponds to the algorithm mBM-2 using
RDE approach proposed in this paper. The label SED(12,12) denotes
the algorithm presented in \cite{Lee-globecom08}. While all these
three algorithms use the same number of $2^{11}$ erasure patterns,
at a FER of $10^{-4}$, the mBM2(RDE,11) provides a performance gain
of roughly 0.4 dB over the SED(12,12) and outperforms the mBM2(RD,11)
by about 0.1 dB. The conventional HDD and the GMD algorithms have
modest performance since they use only one or a few decoding attempts.
Compared to the conventional HDD, the proposed algorithm mBM-2(RDE,11)
gives approximately a 0.9 dB gain. It also outperforms the Koetter-Vardy
(KV) algorithm \cite{Koetter-it03} with infinite multiplicity $(\mu=\infty)$.
The performance of mBM-2(RDE,11) is roughly the same as the performance
of mASD-3(RDE,11). This implies that, for this setup, algorithms based
on multiple trials of BM decoding perform as good as algorithms based
on running the more complicated ASD decoding multiple times. In Fig.
\ref{fig:simqSC}, we simulate the performance mBM-1(RDE,11) for the
same RS code over an $m$-SC channel. One curve reflects the simulated
frame-error rate (FER) and the other is the approximation derived
from $2^{-F}$ where $F$ is given in (\ref{eq:ComputeF}) with $R=11$.

\begin{figure}[t]
\centering{}\includegraphics[scale=0.4]{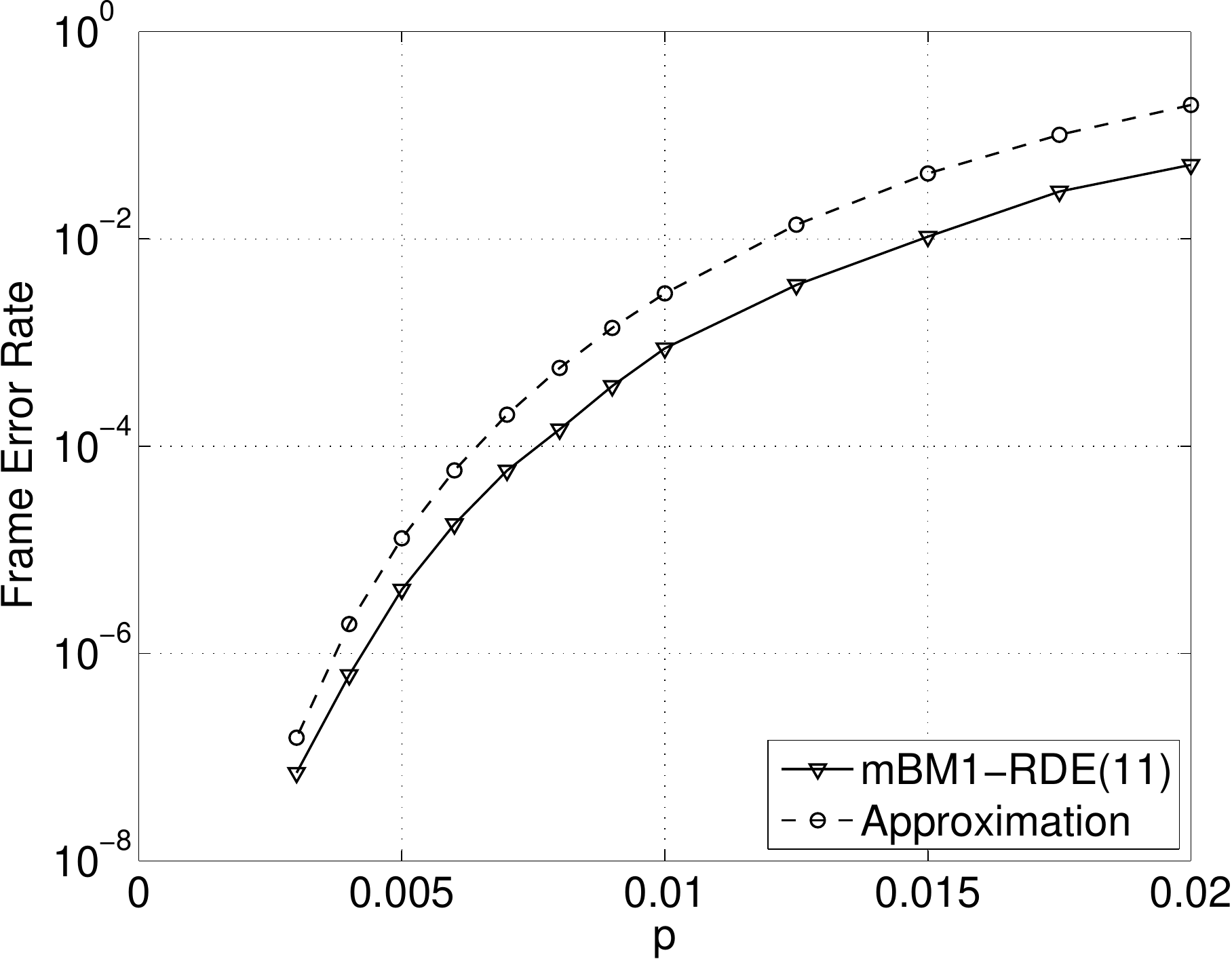}\caption{\label{fig:simqSC} Performance of mBM-1(RDE,11) and its approximation
$2^{-F}$ where $F$ is given in (\ref{eq:ComputeF}) for the (255,239)
RS code over an $m$-SC($p$) channel.}

\end{figure}
\vspace{-1mm}

\section{CONCLUSION\label{sec:Conclusion}}

A RDE-based algorithm has been proposed for multiple decoding attempts
of RS codes. The RDE analysis shows that this approach has several
advantages. Firstly, the RDE approach achieves a near optimal performance-versus-complexity
trade-off among algorithms that consider running a decoding scheme
multiple times (see Remark \ref{rem:The-RDE-approach}). Secondly,
it can help one estimate the error probability using exponentially
tight bounds for $N$ large enough. Simulations are used to confirm
that algorithms using this approach achieve a better trade-off than
previously known algorithms. Along with this, a numerical method is
given to compute the required RDE function.

Our future work focuses on extending this approach to analyze multiple
decoding attempts for ISI channels. In this case, it makes sense for
the decoder to consider multiple candidate error-events during decoding.
Extending the RD approach directly gives a RD problem for Markov sources
in the large distortion regime. Some work is required though because
this is a well-known open problem.

\bibliographystyle{IEEEtran}

\end{document}